\begin{document}

\newtheorem{definition}{Definiton}[section]
\newtheorem{theorem}{Theorem}[section]

\title{Uniqueness of the static Einstein-Maxwell spacetimes with a photon sphere}

\author[1,2]{Stoytcho Yazadjiev\thanks{yazad@phys.uni-sofia.bg}}
\author[1]{Boian Lazov\thanks{boian\_lazov@yahoo.com}}
\affil[1]{Department of Theoretical Physics, Faculty of Physics, Sofia University, Sofia 1164, Bulgaria}
\affil[2]{Theoretical Astrophysics, Eberhard-Karls University of T\"ubingen, T\"ubingen 72076, Germany}

\maketitle

\begin{abstract}
We consider the problem of uniqueness of static and asymptotically flat Einstein-Maxwell spacetimes with a photon sphere $P^3$. We are using a naturally modified definition of a photon sphere for electrically charged spacetimes with the additional property that the one-form $\iota_{\xi}F$ is normal to the photon sphere. For simplicity we are restricting ourselves to the case of zero magnetic charge and assume that the lapse function regularly foliates the spacetime outside the photon sphere. With this information we prove that $P^3$ has constant mean curvature and constant scalar curvature. We also derive a few equations which we later use to prove the main uniqueness theorem, i. e. the static asymptotically flat Einstein-Maxwell spacetimes with a non-extremal photon sphere are isometric to the Reissner-Nordstr\"om one with mass $M$ and electric charge $Q$ subject to $\frac{Q^2}{M^2}\le \frac{9}{8}$.
\end{abstract}

\section{Introduction}

\indent Photon spheres are a well-known prediction of General Relativity and the generalised theories of gravitation. They are regions of spacetime where light can be confined to closed orbits. Photon spheres are expected to be a characteristic of ultracompact objects such as black holes, neutron stars, wormholes and naked singularities \cite{Iyer1985}-\cite{Baldiotti2014}. They are closely related to gravitational lensing and thus play an important role in astronomy and astrophysics. From an astrophisical perspective a photon sphere is a timelike hypersurface on which the light bending angle is unboundedly large \cite{Virbhadra2000},\cite{Virbhadra2002}. The presence of a photon sphere around an ultracompact object leads to the so-called relativistic images, which are vitally important for observational astrophysics \cite{Virbhadra2000},\cite{Virbhadra2002}.\\
\indent Photon spheres are objects with very specific characteristics. One of them is that the lapse function in static spacetimes is constant on the photon sphere. Another is that  the photon sphere is a totally umbilic hypersurface with constant mean and scalar curvatures as well as constant surface gravity
\cite{Cederbaum2014}, \cite{Yazadjiev2015}. These properties make them similar to event horizons. This naturally leads to the question whether photon spheres can be used in the classification of solutions to a given gravitational theory. This has indeed been done for the Schwarzschild spacetime. Namely it has been proven recently that static asymptopically flat solutions to the Einstein equations in vacuum with mass $M$ possessing a photon sphere are isometric to the Schwarzschild solution \cite{Cederbaum2014}. A similar uniqueness theorem has also been proven for the static and asymptotically flat solutions to the Einstein-scalar field equations \cite{Yazadjiev2015}. In general this uniqueness question is harder than the one for black hole horizons \cite{Heusler1996}, because the class of spacetimes with a photon sphere is much larger than the one with an event horizon.\\
\indent In our paper we consider static and asymptotically flat vacuum solutions to the Einstein-Maxwell system of equations containing a photon sphere. We restrict ourselves to the case of zero magnetic charge for simplicity and we define the photon sphere to have one more property compared to the vacuum case, namely the electric field to be  orthogonal to the photon sphere. We prove that these spacetimes with $M^2\ne Q^2$ are isometric to the Reissner-Nordstr\"om solution with mass $M$ and electric charge $Q$ subject to the condition
$\frac{Q^2}{M^2}\le \frac{9}{8}$.

\section{Preliminary definitions and equations}

\indent In the present paper we consider Einstein-Maxwell gravity described by the following action:
\begin{align}
\mathcal{S}=\frac{1}{16\pi}\int_{\mathfrak{L}^4}\mathrm{d}^4 x \sqrt{- \mathfrak{g}} (\mathfrak{R} - F_{\mu\nu}F^{\mu\nu}),
\end{align}

\noindent where the spacetime manifold is denoted by $(\mathfrak{L}^4,\mathfrak{g})$, $\mathfrak{R}$ is the Ricci scalar curvature and $F$ is the Maxwell tensor. From this action we get the field equations:
\begin{align}
&\mathfrak{R}_{\mu\nu}=2\left( F_{\mu\alpha}F^{\ \alpha}_{\nu} -\frac{1}{4}\mathfrak{g}_{\mu\nu}F_{\alpha\beta}F^{\alpha\beta} \right),\\
&\mathrm{d}\star F=0,\\
&\mathrm{d}F=0,
\end{align}
where $\star$ is the Hodge star operator.

\indent We are considering static spacetimes, which means that there exists a smooth Riemannian manifold $(M^3,g)$ and a smooth lapse function
$N:M^3\longrightarrow \mathbb{R}^+$  such that
\begin{align}
\mathfrak{L}^4=\mathbb{R}\times M^3,\ \mathfrak{g}=-N^2\mathrm{d}t^2+g.\label{stmetr}
\end{align}

Let $\xi=\frac{\partial}{\partial t}$ be the timelike Killing vector. Then we define staticity of the Maxwell field,
\begin{align}
\mathcal{L}_{\xi}F=0.
\end{align}
We shall focus on  the purely electric case with $\iota_{\xi}\star F=0$.

\indent The spacetimes we are considering are also asymtotically flat. Asymptotic flatness is defined in the usual way. The spacetime is asymptotically
flat if there exists a compact set $K\in M^3$ such that $M^3 \setminus K$ is diffeomorphic to $\mathbb{R}^3\setminus{\bar B}$ where ${\bar B}$ is the closed unit ball centered at the origin in $\mathbb{R}^3$ and such that
\begin{eqnarray}
g=\delta + O(r^{-1}), \;\;\; N= 1 - \frac{M}{r} + O(r^{-2}),
\end{eqnarray}
with respect to the standard radial coordinate on $\mathbb{R}^3$.  The asymptotic expansion of the electromagnetic field can be
derived easily and we have
\begin{eqnarray}
 F= -\frac{Q}{r^2} dt\wedge dr + O(r^{-3}).
\end{eqnarray}\

\noindent As usual $M$ and $Q$ are the mass and the electric charge, respectively. We will focus our study on the physically interesting case with $M>0$ and $Q\ne0$.\\
\indent We also need to define the photon sphere. We start with the definition of a photon surface \cite{Claudel2001}.
\begin{definition}
An embedded timelike hypersurface $(P^3,p)\hookrightarrow (\mathfrak{L}^4,\mathfrak{g})$ is called a photon surface if any null geodesic initially tangent to $P^3$ remains tangent to $P^3$ as long as it exists.
\end{definition}

\noindent Next we define a photon sphere.
\begin{definition}
Let $(P^3,p)\hookrightarrow (\mathfrak{L}^4,\mathfrak{g})$ be a photon surface. Then $P^3$ is called a photon sphere if the lapse function $N$ is
constant on $P^3$ and the one-form $\iota_{\xi}F$ is  normal to $P^3$.
\end{definition}

We will make an additional technical assumption. We will assume that the lapse function $N$ regularly foliates the spacetime outside the photon sphere, i. e.
\begin{align}
\rho^{-2}=g\left( {}^g\!\nabla N,{}^g\!\nabla N \right)\ne 0
\end{align}

\noindent outside the photon sphere. The spatial part of this exterior region will be denoted by $M^3_{\mathrm{ext}}$ and by definition it has as inner boundary the intersection $\Sigma$ of the outermost photon sphere with the time slice $M^3$. By definition $\Sigma$ is given by $N=N_0$ for some $N_0\in \mathbb{R}^+$. As a consequence of our assumption
all level sets $N=const$, including $\Sigma$, are topological spheres and $M^3_{\mathrm{ext}}$ is topologically ${\mathbb{S}}^2\times \mathbb{R}$.

\indent We  define the electric field one form $E$ by
\begin{align}
E=-\iota_{\xi}F,
\end{align}
and it satisfies  $\mathrm{d}E=0$ as a consequence of the field equations and the electromagnetic staticity. Since $M^3_{\mathrm{ext}}$ is simply connected  this implies the existence of an electric potential $\Phi$ such that $E=\mathrm{d}\Phi$. Using the electric field one{\color{red}-}form we can write an explicit expression for $F$,
\begin{align}
F=-N^{-2}\xi \wedge \mathrm{d}\Phi.
\end{align}
By definition the electric field $E$ is normal to the photon sphere and therefore the electrostatic potential $\Phi$ is constant on the photon sphere. Since the electrostatic potential is defined up to a constant, without loss of generality we shall set $\Phi_{\infty}=0$.\\
\indent Using the form of the metric (\ref{stmetr}) and the form of the Maxwell tensor  we can obtain the dimensionally reduced static Einstein-Maxwell field equations:
\begin{align}
&{}^g\!\Delta N=N^{-1}\,{}^g\!\nabla^i\Phi\,{}^g\!\nabla_i\Phi,\label{EM1}\\
&{}^g\!R_{ij}=N^{-1}\,{}^g\!\nabla_i\,{}^g\!\nabla_jN+N^{-2}\left( g_{ij}\,{}^g\!\nabla^k\Phi\,{}^g\!\nabla_k\Phi -2\,{}^g\!\nabla_i\Phi\,{}^g\!\nabla_j\Phi \right),\label{EM2}\\
&{}^g\!\nabla^i\left( N^{-1}\,{}^g\!\nabla_i\Phi \right)=0.\label{EM3}
\end{align}

In \cite{Israel1968} Israel derived a divergence identity for the lapse function and the electrostatic potential which in our notation is given by

\begin{eqnarray}
{}^g\!\nabla^i\left[2\Phi{}^g\!\nabla_iN - (N + N^{-1}\Phi^2){}^g\!\nabla_i\Phi\right]=0.
\end{eqnarray}
The Gauss theorem applied to this identity  leads to the following functional dependence between the lapse function $N_{0}$ and the electrostatic potential
$\Phi_{0}$ on $\Sigma$
\begin{eqnarray}\label{FDNPHI0}
N^2_{0}=\Phi^2_{0} - 2\frac{M}{Q}\Phi_{0} + 1,
\end{eqnarray}
where we have taken into account that $\Phi$ and $N$ are constant on $\Sigma$ and that $\Phi_{\infty}=0$. Moreover, this functional dependence holds
not only on $\Sigma$ but also  on the whole $M^3_{\mathrm{ext}}$, namely
\begin{eqnarray}\label{FDNPHI}
N^2= \Phi^2 - 2\frac{M}{Q}\Phi + 1.
\end{eqnarray}

In order to prove this we can use the following divergence identity:
\begin{eqnarray}
N w_{i}w^{i}= \frac{1}{2}  {}^g\!\nabla^i\left[(-N^2 + \Phi^2 - 2\frac{M}{Q}\Phi + 1)w_{i} \right],
\end{eqnarray}
\noindent where $w_{i}$ is given by

\begin{eqnarray}
w_{i}=- {}^g\!\nabla_i N + N^{-1}\left(\Phi - \frac{M}{Q}\right) {}^g\!\nabla_i \Phi .
\end{eqnarray}

\noindent This identity is a consequence of the dimensionally reduced field equations. Integrating this identity on $M^3_{\mathrm{ext}}$ with the help of the Gauss theorem and taking into account (\ref{FDNPHI0}) and the asymptotic behavior of $N$ and $\Phi$ we obtain that $w_{i}=0$ on $M^3_{\mathrm{ext}}$. Hence we find
$N^2=\Phi^2 - 2\frac{M}{Q}\Phi + C$ with $C$ being a constant. From the asymptotic behavior of $N$ and $\Phi$  we conclude that $C=1$ which proves (\ref{FDNPHI}).\\
By the maximum principle for elliptic partial differential equations  and by the asymptotic behavior of $N$ for $r\to \infty$ we obtain for the values of $N$ on $M^3_{\mathrm{ext}}$ the following inequality:
\begin{align}
N_0\le N<1.
\end{align}

\section{Auxiliary equations and theorems}

\indent For an isometric embedding $(A^n,a)\hookrightarrow(B^{n+1},b)$ with unit normal $\eta$ and second fundamental form $II(X,Y)=b({}^b\!\nabla_X\eta,Y)$, $X,Y\in \Gamma(TA^n)$, there are a few other equations we will use.
\begin{itemize}
\item The Codazzi equation, which reads
\begin{align}
b({}^b\!R(X,Y,\eta),Z)=({}^a\!\nabla_XII)(Y,Z)-({}^a\!\nabla_YII)(X,Z)\label{codd}
\end{align}
\noindent for all $X,Y,Z\in \Gamma(TA^n)$.
\item The contracted Gauss equation, which reads
\begin{align}
{}^b\!R-2\tau\,{}^b\!R(\eta,\eta)={}^a\!R-\tau({}^a\!\mathrm{tr}II)^2+\tau\left|II\right|^2,\label{cgauss}
\end{align}
\noindent where $\tau=b(\eta,\eta)$.
\item An equation satisfied for every smooth function $f:B^{n+1}\longrightarrow \mathbb{R}$, if $\tau=1$, which reads
\begin{align}
{}^b\!\Delta f={}^a\!\Delta f + {}^b\!\nabla^2f(\eta,\eta)+({}^a\!\mathrm{tr}II)\eta(f).\label{functe}
\end{align}
\end{itemize}

\indent The results in this section are not only vital for the proof of our main theorem but are also interesting  for their own. First we recall a result by Claudel, Virbhadra and Ellis \cite{Claudel2001}.
\begin{theorem}
Let $(P^3,p)\hookrightarrow(\mathfrak{L}^4,\mathfrak{g})$ be an embedded timelike hypersurface. Then $P^3$ is a photon surface if and only if it is totally umbilic (i. e. its second fundamental form is pure trace).
\end{theorem}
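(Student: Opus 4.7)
The plan is to use the Gauss formula for the embedding $(P^3,p)\hookrightarrow(\mathfrak{L}^4,\mathfrak{g})$ with unit spacelike normal $\eta$. For any curve $\gamma:I\to P^3$ with tangent $T=\dot\gamma$, the formula reads
\begin{align}
{}^{\mathfrak{g}}\!\nabla_{T}T = {}^{p}\!\nabla_{T}T + II(T,T)\,\eta.
\end{align}
Thus a curve in $P^3$ is a geodesic of the ambient spacetime if and only if it is an intrinsic geodesic of $(P^3,p)$ and $II(T,T)=0$ along it. Both implications will be derived from this observation, and a small piece of Lorentzian linear algebra will convert the pointwise vanishing of $II$ on null directions into ``pure trace''.

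For the direction ``photon surface $\Rightarrow$ umbilic'', I fix a point $q\in P^3$ and any null vector $T\in T_qP^3$. Let $\gamma$ be the ambient null geodesic with $\dot\gamma(0)=T$. By the photon surface property $\gamma$ stays in $P^3$, so it is an intrinsic curve of $P^3$ satisfying ${}^{\mathfrak{g}}\!\nabla_{\dot\gamma}\dot\gamma=0$; by the Gauss formula this forces $II(\dot\gamma,\dot\gamma)=0$, and in particular $II(T,T)=0$. Since $T$ was an arbitrary null vector of $T_qP^3$, the symmetric bilinear form $II_q$ vanishes on the null cone of $(T_qP^3,p_q)$. A standard polarization argument, carried out in an orthonormal frame $(e_0,e_1,e_2)$ using the null vectors $e_0\pm e_i$ and $e_0+(e_1+e_2)/\sqrt{2}$, shows that a symmetric bilinear form on a $3$-dimensional Lorentzian vector space that vanishes on every null vector must be a multiple of the metric. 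Hence $II_q=\lambda(q)\,p_q$, which is the umbilicity condition.

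For the converse, I assume $II=\lambda\,p$ and pick an arbitrary null vector $T\in T_qP^3$. Let $\tilde\gamma$ be the intrinsic geodesic in $(P^3,p)$ with $\dot{\tilde\gamma}(0)=T$. Since $p(\dot{\tilde\gamma},\dot{\tilde\gamma})$ is parallel transported along an intrinsic geodesic and vanishes at $s=0$, it vanishes identically, so $II(\dot{\tilde\gamma},\dot{\tilde\gamma})=\lambda\,p(\dot{\tilde\gamma},\dot{\tilde\gamma})\equiv 0$. The Gauss formula then gives ${}^{\mathfrak{g}}\!\nabla_{\dot{\tilde\gamma}}\dot{\tilde\gamma}=0$, so $\tilde\gamma$ is an ambient null geodesic. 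Because ambient geodesics are uniquely determined by their initial data, $\tilde\gamma$ coincides with the ambient null geodesic issued from $T$, proving that the latter remains tangent to $P^3$. Since $T$ was arbitrary, $P^3$ is a photon surface.

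The only mildly subtle point is the linear-algebra lemma that ``$II(T,T)=0$ on all null $T$'' implies $II\propto p$ in the Lorentzian setting; the main work is just to verify by polarization with sufficiently many null directions that all off-diagonal components vanish and that the spatial diagonal components equal the negative of the timelike one. Everything else is a direct application of the Gauss formula and the uniqueness of geodesics, so I do not expect a genuine obstacle.
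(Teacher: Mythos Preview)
Your argument is correct and is essentially the standard proof of this result. Note, however, that the paper does not supply its own proof of this theorem: it is stated there as a recalled result of Claudel, Virbhadra and Ellis, with a citation but no demonstration. So there is nothing in the paper to compare your proof against; what you have written is, up to minor cosmetic differences, the Claudel--Virbhadra--Ellis argument itself (Gauss formula plus the Lorentzian linear-algebra lemma that a symmetric bilinear form vanishing on the null cone of a three-dimensional Lorentzian space is pure trace).

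One small remark on the converse direction: your argument shows that the ambient null geodesic coincides with the intrinsic null geodesic on the latter's maximal interval of existence, whereas the definition asks that the ambient geodesic remain in $P^3$ as long as \emph{it} exists. A brief open--closed argument (the set of parameter values at which the ambient geodesic lies in $P^3$ with tangent in $TP^3$ is nonempty, closed because $P^3$ is embedded, and open by the local argument you gave) closes this gap. This is a routine point and does not affect the substance of your proof.
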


\noindent The second fundamental form of $P^3$ can then be written as $\mathfrak{h}=\frac{\mathfrak{H}}{3}p$, where $\mathfrak{H}$ is the mean curvature of $P^3$. Using this we can prove a theorem concerning the mean and scalar curvatures of the photon sphere $P^3$ in our spacetime.
\begin{theorem}
Let $(\mathfrak{L}^4,\mathfrak{g},F)$ be a static, asymptotically flat Einstein-Maxwell spacetime possessing a photon sphere $(P^3,p)\hookrightarrow(\mathfrak{L}^4,\mathfrak{g})$.  Then $P^3$ has constant mean curvature (CMC) and constant scalar curvature (CSC).
\end{theorem}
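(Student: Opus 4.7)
The plan is to first establish constant mean curvature via the Codazzi equation and then bootstrap to constant scalar curvature through the contracted Gauss equation, with the algebraic structure of the Einstein-Maxwell stress-energy tensor and the Israel-type identity (\ref{FDNPHI}) doing the crucial work at each step.

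For CMC, Theorem 3.1 gives $\mathfrak{h}=\tfrac{\mathfrak{H}}{3}p$. Substituting this into the Codazzi equation (\ref{codd}) and tracing the tangent indices with $p^{-1}$ yields the standard identity $\mathfrak{R}(X,\eta)\propto X(\mathfrak{H})$ (with coefficient $\pm\tfrac{2}{3}$ depending on sign conventions) for every $X\in\Gamma(TP^3)$. I would then evaluate the left-hand side from the Einstein-Maxwell equation with $F=-N^{-2}\xi\wedge d\Phi$. Because $\Phi$ is constant on $P^3$ by the photon-sphere definition, $d\Phi$ is normal to $P^3$, so $X(\Phi)=0$ for tangent $X$, and $\xi^{\alpha}(d\Phi)_{\alpha}=\xi(\Phi)=0$ by staticity. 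A short computation then gives $F_{X\alpha}F_{\eta}{}^{\alpha}=0$ (the one potentially nonzero piece contains the factor $\xi^{\alpha}(d\Phi)_{\alpha}$), and together with $\mathfrak{g}(X,\eta)=0$ this forces $\mathfrak{R}(X,\eta)=0$; hence $X(\mathfrak{H})=0$ and $\mathfrak{H}$ is constant on $P^3$.

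For CSC, I would apply the contracted Gauss equation (\ref{cgauss}) with $\tau=1$. The Maxwell stress-energy tensor is trace-free, so ${}^{\mathfrak{g}}\!R=0$; substituting $\mathfrak{h}=\tfrac{\mathfrak{H}}{3}p$ into (\ref{cgauss}) then reduces it to
\[
{}^{p}\!R = -2\,\mathfrak{R}(\eta,\eta) + \tfrac{2}{3}\mathfrak{H}^{2}.
\]
The second term is already constant by the CMC step, so it suffices to show that $\mathfrak{R}(\eta,\eta)$ is constant on $P^3$. Using $d\Phi=(\eta\Phi)\eta^{\flat}$ on $P^3$, the Einstein-Maxwell equation evaluates to $\mathfrak{R}(\eta,\eta)=-N^{-2}(\eta\Phi)^{2}$, and since $N\equiv N_{0}$ on $P^3$ the problem collapses to showing that $\eta\Phi$ is constant on $P^3$.

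For this last reduction I would compute $\mathfrak{h}(\xi,\xi)$ in two ways. The umbilic expression gives $\mathfrak{h}(\xi,\xi)=\tfrac{\mathfrak{H}}{3}\mathfrak{g}(\xi,\xi)=-\tfrac{N_{0}^{2}\mathfrak{H}}{3}$, while the defining formula for $\mathfrak{h}$ combined with the Killing identity $\nabla_{\xi}\xi=N\nabla N$ gives $\mathfrak{h}(\xi,\xi)=-N_{0}(\eta N)$. Comparing yields $(\eta N)|_{P^{3}}=\tfrac{N_{0}\mathfrak{H}}{3}$, a constant. Differentiating the Israel relation (\ref{FDNPHI}) along $\eta$ and restricting to $P^3$ then gives $N_{0}(\eta N)=(\Phi_{0}-M/Q)(\eta\Phi)$, so $\eta\Phi$ is constant as well, and substituting back shows that ${}^{p}\!R$ is constant. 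The main obstacle I anticipate is this last sub-step: the geometric tools (Codazzi, Gauss, umbilicity) alone cannot pin down the normal derivative of $\Phi$, and only the global Einstein-Maxwell identity (\ref{FDNPHI}) ties $\eta\Phi$ to $\eta N$ on the photon sphere and so makes CSC follow.
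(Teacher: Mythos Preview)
Your proposal is correct and follows the paper's overall architecture: Codazzi plus umbilicity for CMC, then the contracted Gauss equation with $\mathfrak{R}=0$ to reduce CSC to constancy of $\mathfrak{R}(\nu,\nu)=-N^{-2}E_\nu^2$, and finally the identity (\ref{FDNPHI}) to pass from constancy of the normal derivative of $N$ to that of $\Phi$.

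The genuine difference is in how you show $\nu(N)$ is constant on the photon sphere. The paper works on the spatial slice: it first applies the Codazzi equation to $\Sigma\hookrightarrow M^3$ to obtain ${}^g\!R(Y,\nu)=0$, and then computes ${}^g\!\mathcal{L}_X(\nu(N))$ using the dimensionally reduced field equations to conclude that $\nu(N)$ is constant. Your argument is more direct and stays in four dimensions: evaluating the umbilic relation $\mathfrak{h}(\xi,\xi)=\tfrac{\mathfrak{H}}{3}p(\xi,\xi)$ against the Killing identity $\nabla_\xi\xi=N\,\nabla N$ immediately yields $\nu(N)=\tfrac{N_0\mathfrak{H}}{3}$. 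This is not only shorter but also gives the explicit value; the paper recovers the equivalent relation $2\nu(N)=N_0 H$ (eq.~(\ref{NH})) only later, after integrating (\ref{prem4}) and (\ref{prem3}) with Gauss--Bonnet. Both routes then invoke (\ref{FDNPHI}) in the same way for the final step, so the tacit nondegeneracy $\Phi_0\neq M/Q$ (equivalently $\nu(N)\neq 0$, guaranteed here by the regular-foliation hypothesis and $\mathfrak{H}\neq 0$) is needed in either approach.
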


\begin{proof}
To prove the theorem we will first use the Codazzi equation (\ref{codd}) for $(P^3,p)\hookrightarrow(\mathfrak{L}^4,\mathfrak{g})$ with unit normal $\nu$:
\begin{align}
\mathfrak{g}(\mathfrak{R}(X,Y,\nu),Z)=&({}^p\!\nabla_X\mathfrak{h})(Y,Z)-({}^p\!\nabla_Y\mathfrak{h})(X,Z)\\
=&X\left(\frac{\mathfrak{H}}{3}\right)p(Y,Z)-Y\left(\frac{\mathfrak{H}}{3}\right)p(X,Z),\notag
\end{align}

\noindent where $X,Y,Z\in \Gamma(TP^3)$. Now we contract the slots $X$ and $Z$ and obtain
\begin{align}
\mathfrak{R}(Y,\nu)=(1-3)Y\left(\frac{\mathfrak{H}}{3}\right).\label{cricc}
\end{align}

\noindent Using the Einstein-Maxwell equations to calculate the left-hand side of (\ref{cricc}) we get
\begin{align}
\mathfrak{R}_{\alpha\beta}Y^{\alpha}\nu^{\beta}=2\frac{\iota_EE}{(\iota_{\xi}\xi)^2}\iota_{\xi}Y\iota_{\xi}\nu+2\frac{1}{\iota_{\xi}\xi}\iota_EY\iota_E\nu.
\end{align}

\noindent Hence, taking into account that  $E^{\mu}$ is normal to $P^3$ and that $\xi\in \Gamma(TP^3)$,  it follows that $\iota_{E}Y=0$ and $\iota_{\xi}\nu=0$. Finally, we obtain
\begin{align}
0=(1-3)Y\left(\frac{\mathfrak{H}}{3}\right),
\end{align}

\noindent which means that $P^3$ has CMC since $Y$ is an arbitrary tangent vector to $P^3$.\\
\indent Next we need to prove that $P^3$ has constant scalar curvature. To do this we will use the contracted Gauss equation (\ref{cgauss}).
\begin{align}
\mathfrak{R}-2\tau\mathfrak{R}(\nu,\nu)=&{}^p\!R-\tau({}^p\!\mathrm{tr}\mathfrak{h})^2+\tau\left|\mathfrak{h}\right|^2\\
=&{}^p\!R-\mathfrak{H}^2+\frac{\mathfrak{H}^2}{3}\notag\\
=&{}^p\!R-\frac{2}{3}\mathfrak{H}^2.\notag
\end{align}

\noindent From the field equations we know that the Ricci scalar $\mathfrak{R}$ vanishes, leading to
\begin{align}
{}^p\!R=\frac{2}{3}\mathfrak{H}^2-2\mathfrak{R}(\nu,\nu).
\end{align}

\noindent Here we can calculate $\mathfrak{R}(\nu,\nu)$, again using the field equations.
\begin{align}
\mathfrak{R}(\nu,\nu)=2\left( \frac{1}{\iota_{\xi}\xi}(\iota_E\nu)^2 - \frac{1}{2}\frac{1}{\iota_{\xi}\xi}\iota_EE \right).\label{ricgauss}
\end{align}

\noindent Since $E$ is normal to $P^3$ we have $\iota_E\nu=E_{\nu}$. Then (\ref{ricgauss}) takes the form
\begin{align}
\mathfrak{R}(\nu,\nu)=-\frac{E_{\nu}^2}{N^2}.
\end{align}

\indent Finally, we obtain that $P^3$ has CSC, given by the expression
\begin{align}
{}^p\!R=\frac{2}{3}\mathfrak{H}^2+2\frac{E_{\nu}^2}{N^2}. \label{riccc}
\end{align}

\indent We will show below that $E_{\nu}$ is constant on $P^3$ which shows that ${}^p\!R$ is constant. This completes the proof.
\end{proof}

\indent Next we will derive a few more useful relations. We start off by computing the second fundamental form $h$ of $(\Sigma,\sigma)\hookrightarrow(M^3,g)$ with unit normal $\nu$. Let $X,Y,Z\in \Gamma(T\Sigma)$ be arbitrary tangent vectors to $\Sigma$. Then
\begin{align}
h(X,Y)=g({}^g\!\nabla_X\nu,Y)=\mathfrak{g}({}^{\mathfrak{g}}\!\nabla_X\nu,Y)=\mathfrak{h}(X,Y)=\frac{\mathfrak{H}}{3}p(X,Y)=\frac{\mathfrak{H}}{3}\sigma(X,Y).
\end{align}

\noindent Thus we see that $\Sigma$ is totally umbilic and has CMC $H=\frac{2}{3}\mathfrak{H}$. Now we make use of the Codazzi equation for $(\Sigma,\sigma)\hookrightarrow(M^3,g)$.
\begin{align}
g({}^g\!R(X,Y,\nu),Z)=&({}^{\sigma}\!\nabla_Xh)(Y,Z)-({}^{\sigma}\!\nabla_Yh)(X,Z)\\
=&{}^{\sigma}\!\nabla_X\left(\frac{\mathfrak{H}}{3}\right)\sigma(Y,Z)-{}^{\sigma}\!\nabla_Y\left(\frac{\mathfrak{H}}{3}\right)\sigma(X,Z).\notag
\end{align}

\noindent After contracting the $X$ and $Z$ slots we obtain
\begin{align}
{}^g\!R(Y,\nu)=0,
\end{align}

\noindent due to $\mathfrak{H}$ being constant on $\Sigma$. We will use this result to prove that $\nu(N)$ is constant on $\Sigma$. For this purpose we calculate the Lie derivative, using the field equations,
\begin{align}
{}^g\!\mathcal{L}_X(\nu(N))=&X^a\,{}^g\!\nabla_a(\nu(N))=X^a\,{}^g\!\nabla_a\left( \rho\,{}^g\!\nabla_bN\,{}^g\!\nabla^bN \right)\\
=&\rho 2X^a\,{}^g\!\nabla^bN\,{}^g\!\nabla_a\,{}^g\!\nabla_bN=2({}^g\!\nabla^2N)(X,\nu)\notag\\
=&2N\,{}^g\!R(\nu,X)-2N^{-1}g(\nu,X){}^g\!\nabla^k\phi\,{}^g\!\nabla_k\phi+4N^{-1}\,{}^g\!\nabla_i\phi\nu^i\,{}^g\!\nabla_j\phi X^j\notag\\
=&2N\,{}^g\!R(\nu,X)=0.\notag
\end{align}
As a direct consequence of (\ref{FDNPHI}) we obtain that $E_{\nu}$ is also constant on $\Sigma$ and thus on $P^3$.

\indent Next we use equation (\ref{functe}), again for $(\Sigma,\sigma)\hookrightarrow(M^3,g)$,
\begin{align}
{}^g\!\Delta N={}^{\sigma}\!\Delta N +{}^g\!\nabla^2N(\nu,\nu)+({}^{\sigma}\!\mathrm{tr}h)\nu(N).\label{funs}
\end{align}

\noindent By the Einstein-Maxwell equations, (\ref{funs}) transforms into
\begin{align}
2N^{-1}\iota_EE=N\,{}^g\!R(\nu,\nu)+2N^{-1}(\iota_E\nu)^2+\nu(N)H.\label{imp1}
\end{align}

\noindent The contracted Gauss equation gives
\begin{align}
{}^g\!R-2\,{}^g\!R(\nu,\nu)={}^{\sigma}\!R-\frac{H^2}{2}.\label{imp2}
\end{align}

\noindent Finally, contracting the field equations we get
\begin{align}
{}^g\!R=2N^{-2}\iota_EE.\label{imp3}
\end{align}

\noindent We can combine (\ref{imp1}-\ref{imp3}) to get
\begin{align}
N\,{}^{\sigma}\!R=2N^{-1}E_{\nu}^2+2\nu(N)H+\frac{1}{2}NH^2.\label{imp4}
\end{align}

\indent We next integrate (\ref{imp4}) over $\Sigma$.
\begin{align}
\int_{\Sigma}N\,{}^{\sigma}\!R\mathrm{d}\mu=\int_{\Sigma}2N^{-1}E_{\nu}^2\mathrm{d}\mu+\int_{\Sigma}2\nu(N)H\mathrm{d}\mu +\int_{\Sigma}\frac{1}{2}NH^2\mathrm{d}\mu.\label{imp5}
\end{align}

\noindent Let us denote the area of $\Sigma$ by $A_{\Sigma}$. With this, using the Gauss-Bonnet theorem and noting that for two-dimensional manifolds $R=2K$, where $K$ is the Gaussian curvature, we can transform (\ref{imp5}) so that it becomes
\begin{align}
N_0=\frac{E_{\nu}^2 A_{\Sigma}}{N_0}+ \frac{1}{4\pi}[\nu(N)] H A_{\Sigma} + \frac{1}{16\pi} H^2 A_{\Sigma}.\label{prem1}
\end{align}

\indent Next we will use the Komar definition for the spacetime mass,
\begin{align}
M=&-\frac{1}{8\pi}\int_{S^2_{\infty}}{}^{\mathfrak{g}}\!\nabla^{\alpha}\xi^{\beta}\mathrm{d}S_{\alpha\beta}\\
=&-\frac{1}{8\pi}\int_{\Sigma}{}^{\mathfrak{g}}\!\nabla^{\alpha} \xi^{\beta}\mathrm{d}S_{\alpha\beta} -\frac{1}{4\pi}\int_{M^3_{\mathrm{ext}}}\mathfrak{R}^{\alpha}_{\beta} \xi^{\beta}\eta_{\alpha}\sqrt{g}\mathrm{d}^3y\notag
\end{align}

\noindent The first term here is the mass of the photon sphere $M_P$. Using the field equations we transform the second term and arrive at the expression
\begin{align}
M=M_P-\frac{1}{4\pi}\int_{\Sigma}\frac{\Phi}{N}E_{\nu}\mathrm{d}\mu.
\end{align}

\noindent Since the electric charge is given by $Q= -\frac{1}{4\pi}\int_{\Sigma}\star F$, one can show that the mass becomes
\begin{align}
M=M_P + Q\Phi_{0}.\label{prem2}
\end{align}

\noindent Similarly we can calculate explicitly the expression for the mass of the photon sphere, arriving at $M_P=\frac{1}{4\pi}[\nu(N)]A_{\Sigma}$. With this we rewrite (\ref{prem1}) to become
\begin{align}
N_0=\frac{1}{4\pi}\frac{E^2_{\nu}A_{\Sigma}}{N_0}+M_P H + \frac{1}{16\pi}{}N_0H^2 A_{\Sigma}.\label{prem4}
\end{align}

\indent Finally, we will use the contracted Gauss equation again, this time  for $(\Sigma^2,\sigma)\hookrightarrow(P^3,p)$ with a unit normal $\eta$,
\begin{align}
{}^p\!R+2\,{}^p\!R(\eta,\eta)={}^{\sigma}\!R.
\end{align}

\noindent Remembering (\ref{riccc}) and using  ${}^p\!R(\eta,\eta)=0$, we get
\begin{align}
{}^{\sigma}\!R={}^p\!R=\frac{2}{3}\mathfrak{H}^2+2\frac{E_{\nu}^2}{N^2}=\frac{3}{2}H^2 +2\frac{E_{\nu}^2}{N^2}. \label{riccsig}
\end{align}

\noindent We integrate (\ref{riccsig}) over $\Sigma$, again taking into account the Gauss-Bonnet theorem,
\begin{align}
1=\frac{3}{16\pi}H^2 A_{\Sigma} + \frac{1}{4\pi}\frac{E_{\nu}^2 A_{\Sigma}}{N_0^2}.\label{prem3}
\end{align}

\indent We can make use of the definition of the electric charge to write it in a form containing the function $E_{\nu}$, i. e. $Q=-\frac{A_{\Sigma} E_{\nu}}{4\pi N_0}$. Then after a few calculations involving (\ref{prem2}), (\ref{prem4}) and (\ref{prem3}) we arrive at
\begin{align}
1=\frac{4\pi Q^2}{A_{\Sigma}} + \frac{3}{2}\left( M - Q\Phi_{0}\right)H.
\end{align}

Another very useful relation that can be easily obtained from (\ref{prem4}) and (\ref{prem3}) is the following:
\begin{eqnarray}\label{NH}
2\nu(N)=N_{0}H .
\end{eqnarray}

\section{Uniqueness theorem}

We first define the notion of {\it non-extremal } photon sphere.
\begin{definition}
A photon sphere is called non-extremal if
\begin{eqnarray}
\frac{1}{4\pi} H^2 A_{\Sigma}\ne 1.
\end{eqnarray}
\end{definition}

\noindent In the case of Einstein-Maxwell equations, using the relations derived in the previous section, it is not difficult to show that the photon sphere is
non-extremal only if $M^2\ne Q^2$.\\
\indent The main result of the present paper is the following:
\begin{theorem}
Let $(\mathfrak{L}_{ext}^4,\mathfrak{g}, F)$ be a static  and asymptotically flat spacetime with  given mass $M$ and charge $Q$, satisfying the Einstein-Maxwell equations and possessing a non-extremal photon sphere as an inner boundary of $\mathfrak{L}_{ext}^4$. Assume that the lapse function regularly foliates $\mathfrak{L}_{ext}^4$. Then $(\mathfrak{L}_{ext}^4,\mathfrak{g}, F)$ is isometric to the Reissner-Nordstr\"om spacetime with mass $M$ and charge $Q$ subject to the inequality $\frac{Q^2}{M^2}\le \frac{9}{8}$.
\end{theorem}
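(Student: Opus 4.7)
The plan is to follow the conformal / positive-mass-theorem strategy pioneered by Bunting and Masood-ul-Alam for the Reissner-Nordstr\"om black hole, adapted to the photon-sphere setting in the spirit of Cederbaum's treatment of the vacuum case. First I would use the functional dependence (\ref{FDNPHI}) to regard $\Phi$, and through it every other quantity on $M^3_{\mathrm{ext}}$, as a function of the single scalar $N$. Building on this I would introduce two conformal factors $\Omega_\pm = \Omega_\pm(N,\Phi)$, of the schematic form $\Omega_\pm^{2}\sim (1\pm N)^{2}-\Phi^{2}$, and define the two conformally related Riemannian 3-manifolds
\begin{equation}
(M^3_{\mathrm{ext}},\hat g_+) \quad \text{and} \quad (M^3_{\mathrm{ext}},\hat g_-), \qquad \hat g_\pm = \Omega_\pm^{2}\, g.
\end{equation}

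Next I would verify by direct calculation, using the dimensionally reduced field equations (\ref{EM1})-(\ref{EM3}) together with the functional relation (\ref{FDNPHI}), that both $\hat g_\pm$ have identically vanishing scalar curvature. Reading off the asymptotic behaviour of $N$ and $\Phi$ at spatial infinity one then sees that $\hat g_+$ is asymptotically flat with vanishing ADM mass, while $\hat g_-$ compactifies smoothly, adding a single point at infinity. The two manifolds are then to be glued along the photon sphere $\Sigma$, where by construction $\Omega_+=\Omega_-$, and one must show that the mean curvatures of $\Sigma$ in $\hat g_+$ and $\hat g_-$ are opposite. This is the step in which the material of Section~3 enters in an essential way: the CMC property of $\Sigma$, the scalar curvature identity (\ref{riccsig}), and the relation (\ref{NH}), together with the algebraic links between $H$, $\nu(N)$, $N_{0}$, $\Phi_{0}$, $E_\nu$ and $A_\Sigma$ derived there, should provide exactly the matching conditions required. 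The outcome is a complete, asymptotically flat, scalar-flat Riemannian 3-manifold of class $C^{1,1}$, with no distributional Ricci contribution along the seam and with ADM mass equal to zero.

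At this point I would invoke the rigidity part of the positive mass theorem to conclude that the glued manifold is isometric to Euclidean $\mathbb{R}^{3}$. Undoing the conformal transformation and using $\Phi=\Phi(N)$ then pins down $g$, $N$ and $\Phi$ to take the Reissner-Nordstr\"om form, so that $(\mathfrak{L}^{4}_{\mathrm{ext}},\mathfrak{g},F)$ is isometric to the Reissner-Nordstr\"om spacetime with parameters $M$ and $Q$. The bound $Q^{2}/M^{2}\le 9/8$ is not an extra hypothesis but is recovered a posteriori as the condition that Reissner-Nordstr\"om with these parameters actually admits a photon sphere, i.e.\ that the discriminant $9M^{2}-8Q^{2}$ of the equation defining $r_{ph}$ is non-negative.

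The main obstacle I expect to be twofold. First, the conformal factors $\Omega_\pm$ must be engineered to be strictly positive throughout $M^3_{\mathrm{ext}}$ and to vanish on $\Sigma$ to precisely the right order for the gluing to be $C^{1,1}$; this requires a careful sign and range analysis of $N$ and $\Phi$, using the bound $N_{0}\le N<1$ and the monotonicity of $\Phi$ along the $N$-level sets, and it is here that the non-extremality hypothesis $M^{2}\ne Q^{2}$ is decisive, since it prevents the quadratic (\ref{FDNPHI}) from having a degenerate double root inside $\Omega_\pm$. Second, the matching of the mean curvatures across $\Sigma$ is a delicate identity that must be assembled out of the full package of relations collected in Section~3; showing that they conspire exactly to kill any surface-layer source term, so that the positive mass theorem is applicable in its smooth-rigidity form, is the real technical core of the argument.
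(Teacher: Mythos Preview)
Your proposal follows the Bunting--Masood-ul-Alam / positive-mass-theorem route, which is a genuinely different strategy from the paper's. The paper instead passes to the conformal metric $\gamma_{ij}=N^{2}g_{ij}$, introduces a harmonic potential $\tilde\lambda$ with $\mathrm{d}\tilde\lambda=-N^{-2}\mathrm{d}\Phi$ so that $(N,\Phi)$ becomes a geodesic on the coset $SL(2,\mathbb{R})/SO(1,1)$, and then, treating the cases $Q^{2}/M^{2}<1$ and $Q^{2}/M^{2}>1$ separately with different auxiliary functions $\Gamma,\Omega$, proves two divergence-type integral inequalities whose equality case is equivalent to the vanishing of the Bach tensor of $\gamma$. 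Using (\ref{NH}) and the Section~3 relations, both inequalities reduce to $\Phi_{0}^{2}-\tfrac{3}{2}\tfrac{M}{Q}\Phi_{0}+\tfrac{1}{2}\le 0$ and $\ge 0$ respectively, forcing equality and hence conformal flatness of $g$; an explicit formula for $R(g)_{ijk}R(g)^{ijk}$ then yields spherical symmetry, and Birkhoff gives Reissner--Nordstr\"om with the bound $Q^{2}/M^{2}\le 9/8$ read off from the discriminant of that quadratic.

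Your alternative is a reasonable line of attack, but as written it has a real gap at the gluing step. With conformal factors of the form $\Omega_{\pm}^{2}\sim(1\pm N)^{2}-\Phi^{2}$ one has $\Omega_{+}=\Omega_{-}$ on a level set only where $N=0$, i.e.\ on a horizon. On a photon sphere $N_{0}>0$, so $(1+N_{0})^{2}-\Phi_{0}^{2}\neq(1-N_{0})^{2}-\Phi_{0}^{2}$ regardless of the value of $\Phi_{0}$, and the induced metrics $\hat g_{+}|_{\Sigma}$ and $\hat g_{-}|_{\Sigma}$ do \emph{not} coincide ``by construction'' as you assert; the two copies cannot even be glued $C^{0}$ along $\Sigma$. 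This is exactly what distinguishes the photon-sphere problem from the black-hole one. The positive-mass-theorem proofs that do work here (Cederbaum in vacuum; Cederbaum--Galloway for Einstein--Maxwell) avoid the reflection doubling altogether: they first use the Section~3 identities to show that the intrinsic and extrinsic data of $\Sigma$ match those of the model Reissner--Nordstr\"om photon sphere, and then glue in an explicit Reissner--Nordstr\"om neck (from the horizon out to $r_{\mathrm{ph}}$) before compactifying. Your outline would have to be reworked along these lines; the direct doubling you describe does not go through.
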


\begin{proof}
In proving the theorem we shall follow  \cite{Yazadjiev2015} with some technical modifications due to the fact that the target space metric for
the Einstein-Maxwell equations is Lorentzian in contrast to \cite{Yazadjiev2015}, where the target space metric is Riemannian.

Let us consider the 3-metric $\gamma_{ij}$ on $M^3_{ext}$ defined by

\begin{eqnarray}
\gamma_{ij}= N^2 g_{ij}.
\end{eqnarray}

\noindent Rewriting the dimensionally reduced static Einstein-Maxwell equations in terms of the new metric $\gamma_{ij}$ we have

\begin{align}
&R(\gamma)_{ij}= 2D_{i}\ln(N)D_{j}\ln(N) - 2 N^{-2}D_{i}\Phi D_{j}\Phi, \nonumber \\
&D_{i}D^{i}\ln(N)=N^{-2}D_{i}\Phi D^{i}\Phi, \\
&D_{i}\left(N^{-2}D^{i}\Phi\right)=0. \nonumber
\end{align}

\noindent Further reduction can be achieved by taking in account that $N$ and $\Phi$ are functionally dependent via (\ref{FDNPHI}). However instead of using
$N$ or $\Phi$ it is convenient to use another potential ${\tilde \lambda}$ defined by

\begin{eqnarray}
\mathrm{d}{\tilde \lambda}= - N^{-2}\mathrm{d}\Phi, \;\;  \;\; {\tilde \lambda}_{\infty}=0.
\end{eqnarray}
In terms of this new potential the dimensionally reduced static Einstein-Maxwell equations
become

\begin{eqnarray}
&&R(\gamma)_{ij}= 2\left(\frac{M^2}{Q^2} -1\right) D_{i}{\tilde \lambda} D_{j}{\tilde \lambda}, \\
&&D_{i}D^{i}{\tilde \lambda}=0.
\end{eqnarray}

In fact the potentials $N$ and $\Phi$ parameterise the coset $SL(2,\mathbb{R})/SO(1,1)$ with a metric
\begin{eqnarray}
G_{AB}\mathrm{d}\phi^A \mathrm{d}\phi^B= N^{-2} \mathrm{d}N^2 - N^{-2}\mathrm{d}\Phi^2
\end{eqnarray}

\noindent and it is not difficult to see that $(N(\tilde \lambda), \Phi(\tilde \lambda))$ is a geodesic on  $SL(2,\mathbb{R})/SO(1,1)$ with

\begin{eqnarray}
G_{AB}\frac{\mathrm{d}\phi^A}{\mathrm{d}{\tilde \lambda}} \frac{\mathrm{d}\phi^B}{\mathrm{d}{\tilde \lambda}}= \frac{M^2}{Q^2} -1.
\end{eqnarray}
Depending on the ratio $\frac{Q^2}{M^2}$ we have three types of geodesics which we will formally call "spacelike" for $\frac{Q^2}{M^2}<1$,
"timelike" for $\frac{Q^2}{M^2}>1$ and "null" for $\frac{Q^2}{M^2}=1$. The null geodesics correspond to extremal photon spheres and that is why they will not be considered here. The other types of geodesics have to be considered separately.

\medskip
\noindent

{\bf Case $\frac{Q^2}{M^2}<1$.}

\medskip
\noindent

In studying the case of spacelike geodesics we shall use the "affine parameter" $\lambda$ given by

\begin{eqnarray}
\lambda=\sqrt{\frac{M^2}{Q^2} -1} \;{\tilde \lambda}.
\end{eqnarray}

\noindent We proceed further by  considering the inequalities \cite{Yazadjiev2015}

\begin{eqnarray}\label{II1}
\int_{M^{3}_{ext}} D^{i} \left[\Omega^{-1}\left(\Gamma D_{i}\chi - \chi D_{i}\Gamma\right) \right] \sqrt{\gamma} \mathrm{d}^3x \ge 0
\end{eqnarray}

\noindent and

\begin{eqnarray}\label{II2}
\int_{M^{3}_{ext}} D^{i} \left(\Omega^{-1} D_{i}\chi \right) \sqrt{\gamma} \mathrm{d}^3x \ge \int_{M^{3}_{ext}} D^{i} \left[\Omega^{-1}\left(\Gamma D_{i}\chi - \chi D_{i}\Gamma\right) \right] \sqrt{\gamma} \mathrm{d}^3x ,
\end{eqnarray}
\noindent where $\chi$, $\Gamma$ and $\Omega$ are defined by

\begin{eqnarray}
\chi= \left(\gamma^{ij}D_{i}\Gamma D_{j}\Gamma \right)^{\frac{1}{4}}, \;\;\; \Gamma= \frac{1- e^{2\lambda}}{1 + e^{2\lambda}}, \;\;\; \Omega=\frac{4e^{2\lambda}}{(1 + e^{2\lambda})^2}.
\end{eqnarray}
The equalities in (\ref{II1}) and (\ref{II2}) hold if and only if the Bach tensor $R(\gamma)_{ijk}$ vanishes \cite{Yazadjiev2015}.

After rather long and unpleasant calculations with the help  of Gauss theorem, and taking into account (\ref{NH}), one can show that the first inequality (\ref{II1}) is equivalent to

\begin{eqnarray}\label{INEQ1}
\Phi^2_{0} - \frac{3}{2}\frac{M}{Q}\Phi_{0} + \frac{1}{2}\le 0,
\end{eqnarray}
while the second inequality  (\ref{II2}) gives

\begin{eqnarray}\label{INEQ2}
\Phi^2_{0} - \frac{3}{2}\frac{M}{Q}\Phi_{0} + \frac{1}{2}\ge 0.
\end{eqnarray}

\noindent Hence we conclude that

\begin{eqnarray}\label{HSEPHI}
\Phi^2_{0} - \frac{3}{2}\frac{M}{Q}\Phi_{0} + \frac{1}{2}= 0
\end{eqnarray}
and therefore $R(\gamma)_{ijk}=0$. Since the Bach tensor vanishes we conclude that the metric $\gamma_{ij}$ is conformally flat. Obviously the same holds for the metric $g_{ij}$.

If we denote the surfaces of constant $N$ embedded in $M^3$ by $\Sigma_N$, $(\Sigma_N,\sigma)\hookrightarrow(M^3,g)$, we can write our spacetime metic in the form:
\begin{align}
\mathfrak{g}=-N^2\mathrm{d}t^2+\rho^2\mathrm{d}N^2+\sigma_{AB}\mathrm{d}x^A \mathrm{d}x^B.
\end{align}

Using the dimensionally reduced field equations one can show that

\begin{eqnarray}
R(g)_{ijk}R(g)^{ijk}= \frac{8}{N^4\rho^4}\left[ \frac{M^2 - Q^2}{M^2  - Q^2 +   Q^2 N^2}  \right]^2  \times \\\nonumber  \left[\left(h^{\Sigma_{N}}_{AB}- \frac{1}{2}H^{\Sigma_{N}} \sigma_{AB}\right)
\left(h^{\Sigma_{N}\, AB}- \frac{1}{2}H^{\Sigma_{N}} \sigma^{AB} \right)    + \frac{1}{2\rho^2}\sigma^{AB}\partial_{A}\rho\partial_{B}\rho \right] ,
\end{eqnarray}
where $h^{\Sigma_{N}}_{AB}$ is the second fundamental form of $\Sigma_{N}$ and $H^{\Sigma_{N}}$ is its trace. Then, for $\frac{Q^2}{M^2} <1$, we conclude
that

\begin{eqnarray}
h^{\Sigma_{N}}_{AB}= \frac{1}{2}H^{\Sigma_{N}} \sigma_{AB},\;\;\; \partial_{A}\rho=0.
\end{eqnarray}
The space geometry is therefore spherically symmetric. Then one can easily
show that the spacetime  is isometric to the Reissner-Nordstr\"om spacetime with $\frac{Q^2}{M^2}<1$. This can be done  by direct computation or by using Birkhoff's theorem for the Einstein-Maxwell equations.

Eq. (\ref{HSEPHI}) determines the value of the electrostatic potential and  the lapse function (via (\ref{FDNPHI})) on the photon sphere. In fact eq. (\ref{HSEPHI})
has two solutions{\color{red}:}

\begin{eqnarray}\label{SOLEPS}
\Phi^{\pm}_{0}= \frac{3}{4}\left(\frac{M}{Q} \pm \sqrt{\frac{M^2}{Q^2} -\frac{8}{9} } \right)
\end{eqnarray}
corresponding to two photon spheres. In the present paper we consider only the outermost photon sphere given by $\Phi^{-}_{0}$.

\medskip
\noindent

{\bf Case $\frac{Q^2}{M^2}>1$. }

Here we shall use the affine parameter $\lambda= \sqrt{1- \frac{M^2}{Q^2}}{\tilde \lambda}$ with $-\frac{\pi}{2}<\lambda<\frac{\pi}{2}$. As in the previous case we consider the inequalities
(\ref{II1}) and (\ref{II2}) however with different functions $\Gamma$  and $\Omega$, namely

\begin{eqnarray}
\Gamma= \tan(\lambda), \;\;\; \Omega=1+ \Gamma^2=\cos^{-2}(\lambda).
\end{eqnarray}

In the case under consideration the first inequality (\ref{II1}) is equivalent to (\ref{INEQ1}) while the second one to (\ref{INEQ2}).
Hence we have that (\ref{HSEPHI}) is satisfied and therefore $R(\gamma)_{ijk}=0$, i. e. the metric $g_{ij}$ is conformally flat.
The same arguments as in the previous case show that the spatial geometry is spherically symmetric and that  the spacetime  is isometric to the Reissner-Nordstr\"om spacetime with $1<\frac{Q^2}{M^2}\le \frac{9}{8}$. Note that according to eq. (\ref{SOLEPS}) a photon sphere exists only for $\frac{Q^2}{M^2}\le \frac{9}{8}$.
\end{proof}

\section{Discussion}

In the current paper we have proven that  the static asymptotically flat solutions to the Einstein-Maxwell equations with mass $M$ and electric charge $Q$ possessing a
non-extremal photon sphere are isometric to the Reissner-Nordstr\"om spacetime with the same mass and charge subject to the constraint $\frac{Q^2}{M^2}\le\frac{9}{8}$. We have used simple and physically motivated assumptions, namely that the lapse function foliates the spacetime outside of the photon sphere and that the photon sphere is defined with one additional property (compared to the vacuum Einstein case). This property states that the electric field is normal to the photon sphere. It should be
noted that our theorem does not cover the extremal case which is rather subtle and needs more sophisticated techniques.

In seeking mathematical generality one can ask whether the condition that the lapse function foliates the spacetime can be relaxed and one can consider a priory
 non-connected photon spheres along the lines of \cite{Bunting1987}. Our preliminary studies show that this could be done in the case $\frac{Q^2}{M^2}<1$ (the "black hole" case). In the general case however, the price we have to pay in order to relax the mentioned condition is the increase in technical details and complexity. What's more,
 if we relax the condition that the lapse function foliates the spacetime then we have to assume in addition that the spacetime is simply connected. Nevertheless, we intend  to study this problem and the problem of extremal photon spheres in  future publications. \\

\noindent {\bf Acknowledgements:} S. Y. would like to thank  the Research Group Linkage Programme of the Alexander von Humboldt Foundation
for the support. The partial support by the COST Action MP1304  and by Bulgarian NSF grant DFNI T02/6 is also gratefully acknowledged.

\end{document}